\newtheorem{remark}{Remark}
\newtheorem{theorem}{Theorem}
\newtheorem{lemma}{Lemma}
\newtheorem{corollary}{Corollary}
\newtheorem{assumption}{Assumption}
\begin{document}
\title{Secrecy Analysis of Ambient Backscatter NOMA Systems under I/Q Imbalance}

\author{Xingwang~Li,~\IEEEmembership{Senior Member,~IEEE,}
        Mengle~Zhao,~\IEEEmembership{Student Member,~IEEE,}
        Yuanwei~Liu,~\IEEEmembership{Senior Member,~IEEE,}
        Lihua~Li,~\IEEEmembership{Member,~IEEE,}
        Zhiguo~Ding,~\IEEEmembership{Fellow,~IEEE,}
        and Arumugam~Nallanathan,~\IEEEmembership{Fellow,~IEEE}
%\thanks{The work of Xingwang Li was supported by the Henan Scientific and Technological Research Project under Grant 182102210307, the Fundamental Research Funds for the Universities of Henan Province under Grant NSFRF180309, the Outstanding Youth Science Foundation of Henan Polytechnic University under Grant J2019-4, the National Natural Science Foundation of China under Grant 61601414; The work of Zhiguo Ding was supported by the UK EPSRC under grant number EP/P009719/2 and by H2020-MSCA-RISE-2015 under grant number 690750.}
\thanks{X. Li and M. Zhao are with the School of Physics and Electronic Information Engineering, Henan Polytechnic University, Jiaozuo, China (email:lixingwangbupt@gmail.com, zhaomenglehpu@163.com).}
\thanks{Y. Liu and A. Nallanathanis are with the School of Electronic Engineering and Computer Science, Queen Mary University of London, London, UK, London, UK (email:\{yuanwei.liu, a.nallanathan\}@qmul.ac.uk).}
\thanks{L. Li is with the State Key Laboratory of Networking and Switching Technology, Beijing University of Posts and Telecommunications, China (email:lilihua@bupt.edu.cn).}
\thanks{Z. Ding is with the School of Electrical and Electronic Engineering, The University of Manchester, Manchester, UK (email: zhiguo.ding@manchester.ac.uk).}
}

\maketitle
\nocite{Li2018}
\begin{abstract} We investigate the reliability and security of the ambient backscatter (AmBC) non-orthogonal multiple access (NOMA) systems, where the source aims to communication with two NOMA users in the presence of an eavesdropper. We consider a more practical case that nodes and backscatter device (BD) suffer from in-phase and quadrature-phase imbalance (IQI). More specifically, exact analytical expressions for the outage probability (OP) and the intercept probability (IP) are derived in closed-form. Moreover, the asymptotic behaviors and corresponding diversity orders for the OP are discussed. Numerical results show that: 1) Although IQI reduces the reliability, it can enhance the security. 2) Compared with the traditional orthogonal multiple access (OMA) system, the AmBC-NOMA system can obtain better reliability when the signal-to-noise (SNR) ratio is low; 3) There are error floors for the OP because of the reflection coefficient $\beta$.

\end{abstract}

\begin{IEEEkeywords}
Ambient backscatter communication, in-phase and quadrature-phase imbalance, non-orthogonal multiple access, physical layer security
\end{IEEEkeywords}

\section{Introduction}
Non-orthogonal multiple access (NOMA) has been identified as one of the key technologies of the fifth-generation (5G) mobile networks due since it has the advantages of high spectrum efficiency, massive connection and low latency \cite{1}. Different from orthogonal multiple access (OMA), the dominant feature of NOMA is to ensure multiple users to occupy the the same frequency/time resources by power domain multiplexing. These advantages are achieved by employing superposed code (SC) at the transmitter and successive interference cancellation (SIC) at the receiver \cite{2}. Moreover, NOMA can ensure the fairness by allocating more power to the weak users.

On a parallel avenue, ambient backscatter communication (AmBC) is known as a potential technology with high spectrum- and energy-efficient for the green Internet-of-Things (IoT) and has been attracted widespread attention in academia and industry \cite{3}. In general, the AmBC system consists of three parts: ambient radio-frequency (RF) source, backscatter device (BD) and reader. In this respect, the cooperative AmBC system was designed in \cite{4} to allow the reader to recover information from both BD and RF source, and the analytical closed-form expressions for the bit error rate (BER) of the maximum-likelihood (ML), suboptimal linear detector and SIC detectors were derived. In \cite{5}, Guo \emph{et al}. proposed a NOMA-assisted AmBC system to support massive BD connections.

Due to the broadcast characteristic of wireless environments, it is difficult to ensure secure communication for the wireless networks without being eavesdropped by the un-authored receivers. Physical layer security (PLS) has been known as an effective way to improve the security of communication systems, which has sparked a great deal of research interests \cite{6,7,8,9}. The authors of \cite{6} investigated the reliability and security of multi-relay networks in terms of the outage probability (OP) and the intercept probability (IP). Regarding NOMA systems, the authors proposed an enhanced security scheme to against full-duplex proactive eavesdropping \cite{7}. To enhance the security of AmBC systems, the authors in \cite{8} designed an optimal tag selection scheme of the multi-tag AmBC systems. However, the common characteristic of above works are assumed that the transceiver RF front-ends were equipped with ideal components. Unfortunately, the practical RF components are prone to in-phase and quadrature-phase imbalance (IQI) due to mismatch and manufacturing non-idealities \cite{9}, which limit the overall system performance. Therefore, it is of great practical significance to study the secrecy performance of the AmBC NOMA systems with IQI.

Although the PLS of the NOMA systems has been extensively studied, to the best of our knowledge, the impact of IQI on the PLS performance of the AmBC NOMA system has not yet been investigated. This work aims to bridge this gap and investigates the effects of IQI on the PLS of the AmBC NOMA systems, where the eavesdropper can intercept the signals from both the source and reflected signals from BD. Specifically, we study the reliability and security by deriving the analytical closed-form expressions of the OP and the IP for the far user, the near user and the BD, respectively. Furthermore, in order to obtain more insights, the asymptotic behavior for the OP at high signal-to-noise (SNR) is explored, as well as the diversity order. The results have shown that although IOI has deleterious effects on reliability, it can enhance the secure performance of the considered system. Moreover, the proposed AmBC system can provide highly secure communication for the BD.

\begin{figure}[!t]
\setlength{\abovecaptionskip}{0pt}
\centering
\includegraphics [width=2.5in]{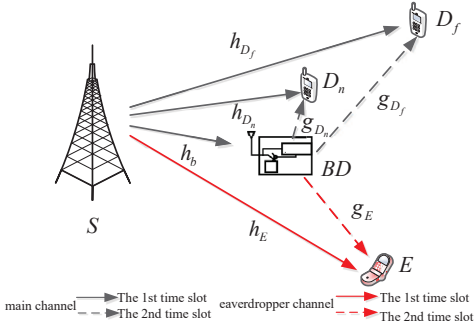}
\caption{System model}
\label{fig2}
\end{figure}

\section{System Model}\label{sec2}
We consider a downlink AmBC NOMA system as illustrated in Fig. 1, which consists of one source $S$, one BD, one far user ($D_f$), one near user ($D_n$) and one eavesdropper ($E$). We assume that: i) All the nodes are equipped with a single antenna; ii) The transceiver RF front-ends of the all nodes suffer from IQI. For convenience, we have
${h_{{i}}} \sim {\rm{{\cal C}{\cal N}}}\left( {0,{\lambda _k}} \right)$, ${g_{{i}}} \sim {\rm{{\cal C}{\cal N}}}\left( {0,{\lambda _k}} \right)$, $i \in \{D_n, D_f, E, \}$, $k \in \{1, 2, 3, 4, 5, 6\}$ and ${h_{{b}}} \sim {\rm{{\cal C}{\cal N}}}\left( {0,{\lambda _7}} \right)$,
%${h_{{D_n}}} \sim {\rm{{\cal C}{\cal N}}}\left( {0,{\lambda _1}} \right)$, ${h_{{D_f}}} \sim {\rm{{\cal C}{\cal N}}}\left( {0,{\lambda _2}} \right)$, ${h_{b}} \sim {\rm{{\cal C}{\cal N}}}\left( {0,{\lambda _3}} \right)$, ${h_{E}} \sim {\rm{{\cal C}{\cal N}}}\left( {0,{\lambda _4}} \right)$, ${g_{{D_n}}} \sim {\rm{{\cal C}{\cal N}}}\left( {0,{\lambda _5}} \right)$, ${g_{{D_f}}} \sim {\rm{{\cal C}{\cal N}}}\left( {0,{\lambda _6}} \right)$, ${g_{E}} \sim {\rm{{\cal C}{\cal N}}}\left( {0,{\lambda _7}} \right)$.

As in \cite{9}, IQI is modeled as phase and/or amplitude imbalance between the in-phase (I) and quadrature-phase (Q) branches. Thus, the time-domain baseband TX signals with IQI at $S$ can be expressed as
\begin{equation}
\label{2}
{y_s} = {\mu _{{t_S}}}{x_s} + {\nu _{{t_S}}}{\left( {{x_s}} \right)^*},
\end{equation}
where ${x_s} = {\sqrt {{a_1}{P_s}} {x_1}} + {\sqrt {{a_2}{P_s}} {x_2}}$, ${P_S}$ is the transmit power at $S$; ${a_1}$ and ${a_2}$ are the power allocation coefficients with ${a_1} + {a_2} \!= \!1$ and ${a_1} \!<\! {a_2}$, respectively; ${x_1}$ and ${x_2}$ are corresponding transmitted signals of $D_f$ and $D_n$ with $E\left( {{{\left| {{x_1}} \right|}^2}} \right)\! =\! E\left( {{{\left| {{x_2}} \right|}^2}} \right)\! =\! 1$, where $E\left(  \cdot \right)$ and ${\left(  \cdot  \right)^*}$ are the expectation and conjugation operations, respectively. Moreover, ${\mu _{t_s}} = \frac{1}{2}\left( {1 + {\varsigma _t}\exp \left( {j{\phi _t}} \right)} \right)$, ${\nu _{t_s}} = \frac{1}{2}\left( {1 - {\varsigma _t}\exp \left( { - j{\phi _t}} \right)} \right)$, where ${\varsigma _{t_s}}$ and ${\phi _{t_s}}$ denote the TX amplitude and phase mismatch levels, respectively.

The BD backscatters the received signal to $D_n$ with its own signal $c\left( t \right)$ with $E\left( {{{\left| {c\left( t \right)} \right|}^2}} \right) = 1$. Two types of received signals at $i$ are transmitted from $S$ and backscattered from BD. Considering IQI at transceivers, the received signals at $i$ are given by
\begin{align}\nonumber
\label{3}
{y_i} &= {\mu _{{r_i}}}\left\{ {\beta {y_{BD}}{g_i}\left[ {{\mu _{{t_{BD}}}}c\left( t \right) + {\nu _{{t_{BD}}}}{{\left( {c\left( t \right)} \right)}^*}} \right] + {h_i}{y_s} + {n_i}} \right\} \\
+& {\nu _{{r_i}}}{\left\{ {\beta {y_{BD}}{g_i}\left[ {{\mu _{{t_{BD}}}}c\left( t \right) \!+ \! {\nu _{{t_{BD}}}}{{\left( {c\left( t \right)} \right)}^*}} \right] \!+ \!{h_i}{y_s} \!+ \!{n_i}} \right\}^*},
\end{align}
where $\beta $ is a complex reflection coefficient used to normalize $c\left( t \right)$; ${n_i} \sim {\rm{{\cal C}{\cal N}}}\left( {0,{N_0}} \right)$ is the complex additive white Gaussian noise (AWGN); ${h_i}$ and ${g_i}$ are the channel coefficients $S \to i$ and $BD \to i$, respectively. Moreover, ${\mu _{r_i}} = \frac{1}{2}\left( {1 + {\varsigma _{r_i}}\exp \left( { - j{\phi _{r_i}}} \right)} \right)$, ${\nu _{r_i}} = \frac{1}{2}\left( {1 - {\varsigma _{r_i}}\exp \left( {j{\phi _{r_i}}} \right)} \right)$, where ${\varsigma _{r_i}}$ and ${\phi _{r_i}}$ are the RX amplitude and phase mismatch levels, respectively. ${y_{BD}} = {\mu _{{r_{BD}}}}{h_b}{y_s} + {\nu _{{r_{BD}}}}{\left( {{h_b}{y_s}} \right)^*}$ is the received signals at $BD$, where ${h_b}$ is the channel coefficient of $S \to BD$. The received signal-to-interference-plus-noise ratio (SINR) for $D_f$ to decode the signal ${x_2}$ can be obtained as
\begin{equation}
\label{4}
\gamma _{{D_f}}^{{x_2}} = \frac{{{\xi _{{D_f}}}{a_2}{\rho _{{D_f}}}\gamma }}{{{\rho _{{g_{{D_f}}}}}{\rho _b}\left( {{A_{{D_f}}} + {Q_{{D_f}}}} \right)\gamma  + {\rho _{{D_f}}}{C_{{D_f}}}\gamma  + {D_{{D_f}}}}},
\end{equation}

According to the NOMA principle, $D_n$ and $E$ can decode signal ${x_2}$, ${x_1}$, and $c\left( t \right)$ in turn with the aid of SIC. Then, the received SINR of $i$ can be given as
\begin{equation}
\label{5}
\gamma _{i}^{{x_2}} = \frac{{{\xi _i}{a_2}{\rho _{{D_n}}}\gamma }}{{{\rho _{{g_{i}}}}{\rho _b}\left( {{A_{i}} + {Q_{i}}} \right)\gamma  + {\rho _{i}}{C_{i}}\gamma  + {D_{i}}}},
\end{equation}
\begin{equation}
\label{6}
\gamma _{i}^{{x_1}} \!=\! \frac{{{\xi _i}{a_1}{\rho _{i}}\gamma }}{{{\rho _{{g_{i}}}}{\rho _b}\!\left( \!{{A_{i}} \!+\! {Q_{i}}} \!\right)\gamma +{\rho _{i}}\!\left( \!{{a_2}{B_{i}} \!+\! {M_{i}}} \!\right)\gamma  \!+\! {D_{i}}}},
\end{equation}
\begin{equation}
\label{7}
\gamma _{i}^{c\left( t \right)} = \frac{{{Q_{i}}{\rho _{{g_{i}}}}{\rho _b}\gamma }}{{{\rho _{{g_{i}}}}{\rho _b}{A_{i}}\gamma  + {\rho _{i}}\left( {{B_{i}} + {M_{i}}} \right)\gamma  + {D_{i}}}},
\end{equation}
where $\gamma  \!= \!{{{P_s}} \mathord{\left/
 {\vphantom {{{P_s}} {{N_0}}}} \right.
 \kern-\nulldelimiterspace} {{N_0}}}$ represents the transmit SNR at $S$; ${\rho _i} \!= \!{\left| {{h_i}} \right|^2}$, ${\rho _{{g_i}}} \!= \!{\left| {{g_i}} \right|^2}$, ${\rho _b} \!=\! {\left| {{h_b}} \right|^2}$, ${A_i} \!= \!{\left| {{\mu _{{r_i}}}} \right|^2}{\beta ^2}{\left| {{\mu _{{r_{BD}}}}} \right|^2}{\left| {{\nu _{{t_{BD}}}}} \right|^2}\left( \!{{{\left| \!{{\mu _{{t_S}}}} \!\right|}^2} \!+\! {{\left| \!{{\nu _{{t_S}}}} \!\right|}^2}} \!\right) \!+\! {\left| {{\mu _{{r_i}}}} \right|^2}{\beta ^2}{\left| {{\nu _{{r_{BD}}}}} \right|^2}\\
 \times{\left| {{\nu _{{t_{BD}}}}} \right|^2}\left( {{{\left| {{\mu _{{t_S}}}^*} \right|}^2} \!+\! {{\left| {{\nu _{{t_S}}}^*} \right|}^2}} \right) \!+\! {\left| {{\nu _{{r_i}}}} \right|^2}{\beta ^2}{\left| {{\mu ^*}_{{r_{BD}}}} \right|^2}{\left| {{\mu ^*}_{{t_{BD}}}} \right|^2}\times\\
 \!\left( \!{{{\left| \!{{\mu _{{t_S}}}^*} \right|}^2} \!+\! {{\left| {{\nu _{{t_S}}}^*} \right|}^2}} \right) \!+ \!{\left| {{\nu _{{r_i}}}} \right|^2}{\beta ^2}{\left| {{\nu ^*}_{{r_{BD}}}} \right|^2}{\left| {{\mu ^*}_{{t_{BD}}}} \right|^2}\!\left(\! {{{\left| {{\mu _{{t_S}}}} \right|}^2} \!+\! {{\left| {{\nu _{{t_S}}}} \right|}^2}} \!\right)$, ${Q_i} \!= \!{\left| {{\mu _{{r_i}}}} \right|^2}{\beta ^2}{\left| {{\mu _{{r_{BD}}}}} \right|^2}{\left| {{\mu _{{t_{BD}}}}} \right|^2}\left( \!{{{\left| \!{{\mu _{{t_S}}}} \!\right|}^2} \!+\! {{\left| \!{{\nu _{{t_S}}}} \!\right|}^2}} \!\right) \!+\! {\left| {{\mu _{{r_i}}}} \right|^2}{\beta ^2}{\left| {{\nu _{{r_{BD}}}}} \right|^2}\\
 \times{\left| {{\mu _{{t_{BD}}}}} \right|^2}\left( {{{\left| {{\mu _{{t_S}}}^*} \right|}^2} \!+\! {{\left| {{\nu _{{t_S}}}^*} \right|}^2}} \right) \!+\! {\left| {{\nu _{{r_i}}}} \right|^2}{\beta ^2}{\left| {{\mu ^*}_{{r_{BD}}}} \right|^2}{\left| {{\nu ^*}_{{t_{BD}}}} \right|^2}\times\\
 \!\left( \!{{{\left| \!{{\mu _{{t_S}}}^*} \right|}^2} \!+\! {{\left| {{\nu _{{t_S}}}^*} \right|}^2}} \right) \!+ \!{\left| {{\nu _{{r_i}}}} \right|^2}{\beta ^2}{\left| {{\nu ^*}_{{r_{BD}}}} \right|^2}{\left| {{\nu ^*}_{{t_{BD}}}} \right|^2}\!\left(\! {{{\left| {{\mu _{{t_S}}}} \right|}^2} \!+\! {{\left| {{\nu _{{t_S}}}} \right|}^2}} \!\right)$, ${\xi _i} = {\left| {{\mu _{{r_i}}}} \right|^2}{\left| {{\mu _{{t_S}}}} \right|^2} + {\left| {{\nu _{{r_i}}}} \right|^2}{\left| {{\nu _{{t_S}}}^*} \right|^2}$, ${B_i} = {\left| {{\mu _{{r_i}}}{\mu _{{t_S}}} - 1} \right|^2} + {\left| {{\nu _{{r_i}}}} \right|^2}{\left| {{\nu _{{t_S}}}^*} \right|^2}$, ${C_i} = {a_1}{\xi _i} + {M_i}$, ${D_i} = {\left| {{\mu _{{r_i}}}} \right|^2} + {\left| {{\nu _{{r_i}}}} \right|^2}$, ${M_i} = {\left| {{\mu _{{r_i}}}} \right|^2}{\left| {{\nu _{{t_S}}}} \right|^2} + {\left| {{\nu _{{r_i}}}} \right|^2}{\left| {{\mu _{{t_S}}}^*} \right|^2}$.

\section{Performance Analyses}\label{sec3}
This section evaluates the reliability and security of the AmBC NOMA systems by deriving the analytical expressions for the OP and the IP.\footnote{The reliability and security are another metrics to characterize the PLS of wireless communication systems without using any secrecy coding, which are formulated by the OP and the IP \cite{7579030}.}
Moreover, we examine asymptotic outage behavior and the diversity order in the high SNR region.

\subsection{Outage Probability Analysis}

\emph{1) Outage Probability for $D_f$}

According to NOMA principle, the outage event occurs at $D_f$ when $D_f$ cannot successfully decode ${x_2}$. Thus, the OP at $D_f$ can be represented as
\begin{equation}
\label{11}
P_{out}^{{D_f}} = 1 - {{\rm{P}}_r}\left( {\gamma _{{D_f}}^{{x_2}} > {\gamma _{th2}}} \right),
\end{equation}
where ${\gamma _{th2}}$ is the target rate at $D_f$.

\begin{theorem}
For Rayleigh fading channels, the analytical expression for the OP of the far user can be obtained as\footnote{The ideal and non-ideal results of OP and IP for the far user and near user can be written in an unified expression in (8), (10), (12), (13), (23) and (24).}
\begin{equation}
\label{12}
P_{out}^{{D_f}} = 1 + {\Delta_1 }{e^{{\Delta_1 } - \frac{{{D_{{D_f}}}{A_1}}}{{{\lambda _2}}}}}{\rm{Ei}}\left( { - {\Delta_1 }} \right),
\end{equation}
\end{theorem}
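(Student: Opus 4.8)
The plan is to evaluate $\Pr(\gamma_{D_f}^{x_2}>\gamma_{th2})$ in \eqref{11} by first collapsing the event to a threshold condition on the single gain $\rho_{D_f}=|h_{D_f}|^2$ and then averaging over the remaining gains $\rho_{g_{D_f}}=|g_{D_f}|^2$ and $\rho_b=|h_b|^2$, which under Rayleigh fading are independent and exponentially distributed. Since $\xi_{D_f},A_{D_f},Q_{D_f},C_{D_f},D_{D_f}$ in \eqref{4} depend only on the deterministic IQI parameters, clearing the denominator turns $\gamma_{D_f}^{x_2}>\gamma_{th2}$ into the inequality, linear in $\rho_{D_f}$,
\[
\rho_{D_f}\big(\xi_{D_f}a_2-\gamma_{th2}C_{D_f}\big)\gamma \;>\; \gamma_{th2}\big[(A_{D_f}+Q_{D_f})\,\rho_{g_{D_f}}\rho_b\,\gamma + D_{D_f}\big].
\]
A case split is needed: if $\xi_{D_f}a_2\le\gamma_{th2}C_{D_f}$ the left-hand side is non-positive while the right-hand side is strictly positive, so the success probability vanishes and $P_{out}^{D_f}=1$; otherwise one divides by the positive coefficient to get $\rho_{D_f}>A_1\big[(A_{D_f}+Q_{D_f})\rho_{g_{D_f}}\rho_b\gamma+D_{D_f}\big]$ with $A_1:=\gamma_{th2}/[\gamma(\xi_{D_f}a_2-\gamma_{th2}C_{D_f})]$. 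The dependence on $\beta$ (through $C_{D_f}$ and through the coefficient $A_{D_f}+Q_{D_f}$) is what will eventually produce the $\beta$-induced error floor.

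Next I would condition on $\rho_{g_{D_f}}$ and $\rho_b$ and use $\Pr(\rho_{D_f}>x)=e^{-x/\lambda_2}$, which gives a conditional success probability of the form $e^{-D_{D_f}A_1/\lambda_2}\,e^{-a\,\rho_{g_{D_f}}\rho_b}$ with $a:=A_1(A_{D_f}+Q_{D_f})\gamma/\lambda_2$; the $D_{D_f}$ term factors out and supplies the $e^{-D_{D_f}A_1/\lambda_2}$ prefactor. The remaining average over the product of the two exponentials is done by nested integration: first $\mathbb{E}_{\rho_b}[e^{-a\rho_{g_{D_f}}\rho_b}]=1/(1+a\lambda_7\rho_{g_{D_f}})$ (the Laplace transform of an exponential), and then $\mathbb{E}_{\rho_{g_{D_f}}}[\cdot]=\int_0^\infty \frac{e^{-y/\lambda_g}}{\lambda_g(1+a\lambda_7 y)}\,dy$, where $\lambda_g$ denotes the variance of the $BD\to D_f$ channel. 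This last integral is evaluated with the standard exponential-integral identity $\int_0^\infty \frac{e^{-\mu t}}{t+b}\,dt=-e^{\mu b}\,\mathrm{Ei}(-\mu b)$, which yields exactly the $\mathrm{Ei}(-\Delta_1)$ term with $\Delta_1:=1/(a\lambda_7\lambda_g)$. Collecting prefactors gives $\Pr(\gamma_{D_f}^{x_2}>\gamma_{th2})=-\Delta_1 e^{\Delta_1-D_{D_f}A_1/\lambda_2}\mathrm{Ei}(-\Delta_1)$, and substituting into \eqref{11} produces \eqref{12}.

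The main obstacle is the double average over $\rho_{g_{D_f}}\rho_b$ inside an exponential: it cannot be carried out in a single step and must be unfolded as a Laplace transform followed by an $\mathrm{Ei}$-integral, with careful bookkeeping of which constants are absorbed into $\Delta_1$ versus the separate $e^{-D_{D_f}A_1/\lambda_2}$ factor. A secondary point is to check that the ideal-hardware specialization ($\mu_{\cdot}\to1$, $\nu_{\cdot}\to0$, hence $\xi_{D_f}\to1$, $D_{D_f}\to1$ and $A_{D_f},Q_{D_f}$ reducing to their $\beta^2$-scaled forms) collapses the derived expression to the same unified form, so that \eqref{12} covers both the ideal and IQI cases as claimed in the footnote.
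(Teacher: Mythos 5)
Your proposal is correct and reaches the stated formula, but it takes a genuinely different route from the paper. The paper first forms the probability density of the product $\rho_{g_{D_f}}\rho_b$ of two independent exponential variables, namely $f_{\rho_{g_{D_f}}\rho_b}(y)=\frac{2}{\lambda_7\lambda_5}K_0\bigl(2\sqrt{y/(\lambda_7\lambda_5)}\bigr)$, and then evaluates the single remaining integral $\int_0^\infty e^{-py}K_0\bigl(2\sqrt{y/(\lambda_7\lambda_5)}\bigr)\,dy$ with the Bessel-function table entry \cite[Eq.~(6.614)]{12}, which returns the $e^{\Delta_1}\mathrm{Ei}(-\Delta_1)$ combination directly. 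You instead keep the two gains separate and integrate sequentially: a Laplace transform of one exponential produces the rational factor $1/(1+a\lambda_7 t)$, and the second average is closed with $\int_0^\infty \frac{e^{-\mu t}}{t+b}\,dt=-e^{\mu b}\mathrm{Ei}(-\mu b)$. The two computations are equivalent (Fubini applied in different orders), but yours is more elementary in that it never invokes $K_0$, while the paper's product-PDF formulation is the one that carries over to Theorem~3, where the integral over the product variable runs over a finite range and must be handled by Gaussian--Chebyshev quadrature applied to the $K_0$ integrand. Your explicit case split ($P_{out}^{D_f}=1$ when $\xi_{D_f}a_2\le\gamma_{th2}C_{D_f}$) is a point the paper leaves implicit, and your placement of the factor $\gamma$ inside $A_1$ differs from the paper's stated definition of $A_1$; your convention is actually the one consistent with Corollary~1, where the term $e^{-D_{D_f}A_1/\lambda_2}$ is dropped in the high-SNR limit and hence must decay with $\gamma$. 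One caveat on your closing remark: the unified ideal/non-ideal form follows simply because the derivation never uses any property of the IQI constants beyond positivity, not because of any special collapse of $\xi_{D_f}$, $D_{D_f}$, $A_{D_f}$, $Q_{D_f}$.
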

\noindent where ${A_1} = {{{\gamma _{th2}}} \mathord{\left/ {\vphantom {{{\gamma _{th2}}} {\left( {{\xi _{{D_f}}}{a_2} - {C_{{D_f}}}{\gamma _{th2}}} \right)}}} \right.
 \kern-\nulldelimiterspace} {\left( {{\xi _{{D_f}}}{a_2} - {C_{{D_f}}}{\gamma _{th2}}} \right)}}$, ${\Delta_1 } \!=\! {{{\lambda _2}} \mathord{\left/ {\vphantom {{{\lambda _2}} {\left( {{\lambda _7}{\lambda _5}{A_1}\left( {{A_{{D_f}}} + {Q_{{D_f}}}} \right)} \right)}}} \right. \kern-\nulldelimiterspace} {\left( {{\lambda _7}{\lambda _5}{A_1}\left( {{A_{{D_f}}} + {Q_{{D_f}}}} \right)} \right)}}$, and ${\rm{Ei}}\left( x \right) = \int_{ - \infty }^x {\frac{{{e^\rho }}}{\rho }} d\rho $ is the exponential integral function \cite{12}.

\begin{proof}
By substituting (\ref{4}) into (10), the OP of $D_f$ can be expressed as
\begin{align}\nonumber
\label{12}
&P_{out}^{{D_f}} \!=\! \int_{{A_1}\left( \!{{\rho _{{g_{{D_f}}}}}{\rho _b}\left( \!{{A_{{D_f}}} \!+\! {Q_{{D_f}}}} \!\right) + {D_{{D_f}}}} \!\right)}^\infty  \!{{f_{{\rho _{{D_f}}}}}\!\int_0^\infty  \!{{f_{{\rho _{{g_{{D_f}}}}}{\rho _b}}}\left( y \right)} dxdy}\\
&\;=\! \frac{2}{{{\lambda _7}\!{\lambda _5}}}\int_0^\infty  \!{{e^{ \!- \frac{{{A_1}\!\left[ {y\left( {{A_{{D_f}}} + {Q_{{D_f}}}} \right) \!+\! {D_{{D_f}}}} \right]}}{{{\lambda _2}}}}}\!{K_0}\left( {2\sqrt {\frac{y}{{{\lambda _7}{\lambda _5}}}} } \right)} dy ,
\end{align}
where ${f_{{\rho _{{D_f}}}}} = \frac{1}{{{\lambda _2}}}{e^{ - \frac{x}{{{\lambda _2}}}}}$, ${f_{{\rho _{g{D_f}}}{\rho _b}}}\left( y \right) = \frac{{2{K_0}\left( {2\sqrt {\frac{y}{{{\lambda _7}{\lambda _5}}}} } \right)}}{{{\lambda _7}{\lambda _5}}}$, ${K_0}\left( x \right)$ is the modified Bessel function of the second kind. By using the integral equation \cite[Eq. (6.614)]{12}, we can obtain (11) after some mathematical manipulations.

\end{proof}

\begin{corollary}
At high SNRs, the asymptotic expression for the OP of $D_f$ of the AmBC NOMA system is given as
\begin{equation}
\label{12}
P_{out,\infty }^{{\Delta_1 }} = 1 + {\Delta_1 }{e^{{\Delta_1 } }}{\rm{Ei}}\left( { - {\Delta_1 }} \right),
\end{equation}
\end{corollary}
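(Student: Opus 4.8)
The plan is to obtain the asymptotic formula directly from the exact OP in Theorem~1 by letting the transmit SNR $\gamma\to\infty$. First I would revisit the SINR expression (4): of the three terms in its denominator, $\rho_{g_{D_f}}\rho_b(A_{D_f}+Q_{D_f})\gamma$ and $\rho_{D_f}C_{D_f}\gamma$ both scale with $\gamma$, whereas the noise contribution $D_{D_f}$ does not. Dividing numerator and denominator by $\gamma$ gives
\begin{equation}
\gamma_{D_f}^{x_2}=\frac{\xi_{D_f}a_2\rho_{D_f}}{\rho_{g_{D_f}}\rho_b\left(A_{D_f}+Q_{D_f}\right)+\rho_{D_f}C_{D_f}+D_{D_f}/\gamma},
\end{equation}
so the term $D_{D_f}/\gamma$ becomes negligible at high SNR and the outage condition $\gamma_{D_f}^{x_2}\le\gamma_{th2}$ collapses to $\rho_{D_f}\le A_1\rho_{g_{D_f}}\rho_b(A_{D_f}+Q_{D_f})$; that is, the $D_{D_f}$ term drops out of the lower limit of the inner integral used in the proof of Theorem~1.

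The remaining step is to replay that proof with $D_{D_f}$ deleted from the relevant place. Integrating the exponential density $f_{\rho_{D_f}}$ over $\left[A_1\rho_{g_{D_f}}\rho_b(A_{D_f}+Q_{D_f}),\infty\right)$ now yields the factor $e^{-A_1\rho_{g_{D_f}}\rho_b(A_{D_f}+Q_{D_f})/\lambda_2}$, and averaging over $Y=\rho_{g_{D_f}}\rho_b$ against its density $\frac{2}{\lambda_7\lambda_5}K_0\!\left(2\sqrt{Y/(\lambda_7\lambda_5)}\right)$ and invoking the same integral identity \cite[Eq. (6.614)]{12} reproduces the claimed expression. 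Equivalently --- and more quickly, since the exact OP may be quoted as is --- one notes that $D_{D_f}$ enters the exact formula only through a factor of the form $e^{-D_{D_f}A_1/(\lambda_2\gamma)}$, which tends to $1$ as $\gamma\to\infty$; substituting this limit leaves $P_{out,\infty}^{\Delta_1}=1+\Delta_1 e^{\Delta_1}\,{\rm Ei}\!\left(-\Delta_1\right)$.

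There is essentially no hard part; the one technical point to mention is that the $D_{D_f}/\gamma$ term may be dropped uniformly under the expectation, which follows from dominated convergence because the integrand is monotone in $\gamma$ and bounded by the integrable density $\frac{2}{\lambda_7\lambda_5}K_0\!\left(2\sqrt{Y/(\lambda_7\lambda_5)}\right)$. I would finish by pointing out that, in contrast to a standard diversity computation, the limit is a strictly positive constant that is independent of $\gamma$ (it still carries a dependence on $\beta$ through $A_{D_f}$ and $Q_{D_f}$), so $D_f$ suffers an outage floor and has diversity order zero --- which is exactly the error-floor behaviour emphasised in the abstract and is the substantive message of the corollary.
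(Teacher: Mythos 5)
Your proposal is correct and follows the same route the paper intends: let $\gamma\to\infty$ in the exact expression of Theorem~1, where the only $\gamma$-dependent piece is the factor $e^{-D_{D_f}A_1/(\lambda_2\gamma)}\to 1$ (you rightly restore the $1/\gamma$ in that exponent, which is present in the derivation of the outage threshold $\rho_{D_f}\le A_1\bigl(\rho_{g_{D_f}}\rho_b(A_{D_f}+Q_{D_f})+D_{D_f}/\gamma\bigr)$ and in the analogous Theorem~2, but is dropped in the printed statement of Theorem~1). Your added remarks on dominated convergence and on the resulting error floor and zero diversity order are consistent with Corollary~4 and Remark~1 of the paper.
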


\emph{2) Outage Probability for $D_n$}

To successfully decode ${x_1}$ at $D_n$, two conditions are needed to be met: 1) $D_n$ can successfully decode ${x_2}$; 2) $D_n$ can successfully decode its own information ${x_1}$. Therefore, the OP of $D_n$ can be expressed as
\begin{equation}
\label{11}
P_{out}^{{D_n}} = 1 - {{\rm{P}}_r}\left( {\gamma _{{D_n}}^{{x_2}} > {\gamma _{th2}},\gamma _{{D_n}}^{{x_1}} > {\gamma _{th1}}} \right),
\end{equation}
where ${\gamma _{th1}}$ is the target rate at $D_n$.
\begin{theorem}
For Rayleigh fading channels, the analytical expression for OP of the near user can be obtained as
\begin{equation}
\label{15}
P_{out}^{{D_n}} = 1 + {\Delta_2 }{e^{{\Delta_2 } - \frac{{\varsigma {D_{{D_n}}}}}{{{\lambda _1}\gamma }}}}{\rm{Ei}}\left( { - {\Delta_2 }} \right),
\end{equation}
where $\varsigma  \!=\! \max \left\{ {\frac{{{\gamma _{th1}}}}{{{\xi _{{D_n}}}{a_1} - \left( {{a_2}{B_{{D_n}}} + {M_{{D_n}}}} \right){\gamma _{th1}}}},} \right.\!\!\;\;\left. {\frac{{{\gamma _{th2}}}}{{{\xi _{{D_n}}}{a_2} - {C_{{D_n}}}{\gamma _{th2}}}}} \right\}$, ${\Delta_2 } = \frac{{{\lambda _1}}}{{{\lambda _7}{\lambda _4}\varsigma \left( {{A_{{D_n}}} + {Q_{{D_n}}}} \right)}}$.
\end{theorem}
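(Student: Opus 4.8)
The plan is to follow the same route as the proof of Theorem~1, the only genuinely new ingredient being the reduction of the two simultaneous decoding constraints that define the outage event of $D_n$ to a single threshold on $\rho_{D_n}$.

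First I would substitute (\ref{5}) and (\ref{6}) with $i=D_n$ into $P_{out}^{D_n}=1-\mathrm{P}_r\!\left(\gamma_{D_n}^{x_2}>\gamma_{th2},\,\gamma_{D_n}^{x_1}>\gamma_{th1}\right)$. Clearing denominators in $\gamma_{D_n}^{x_2}>\gamma_{th2}$ and dividing by $\gamma$, the $\gamma$-factors multiplying $\rho_{D_n}C_{D_n}$ and $\rho_{g_{D_n}}\rho_b(A_{D_n}+Q_{D_n})$ cancel, and the constraint becomes $\rho_{D_n}>\big(\gamma_{th2}/(\xi_{D_n}a_2-C_{D_n}\gamma_{th2})\big)\big[\rho_{g_{D_n}}\rho_b(A_{D_n}+Q_{D_n})+D_{D_n}/\gamma\big]$, which is meaningful only when $\xi_{D_n}a_2>C_{D_n}\gamma_{th2}$ (otherwise $P_{out}^{D_n}=1$). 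Handling $\gamma_{D_n}^{x_1}>\gamma_{th1}$ in the same way produces an inequality of identical form with prefactor $\gamma_{th1}/(\xi_{D_n}a_1-(a_2 B_{D_n}+M_{D_n})\gamma_{th1})$. Because the bracketed quantity is the \emph{same} in both inequalities --- it involves only the fading gains and $\gamma$, not the symbol being decoded --- their intersection reduces to $\rho_{D_n}>\varsigma\big[\rho_{g_{D_n}}\rho_b(A_{D_n}+Q_{D_n})+D_{D_n}/\gamma\big]$, with $\varsigma$ the larger of the two prefactors, i.e. the maximum in the statement.

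Next, conditioning on $\rho_{g_{D_n}}\rho_b=y$ and using $f_{\rho_{D_n}}(x)=\frac{1}{\lambda_1}e^{-x/\lambda_1}$, the inner probability equals $\exp\!\big(-\varsigma[y(A_{D_n}+Q_{D_n})+D_{D_n}/\gamma]/\lambda_1\big)$. Averaging over $y$ with the product-of-exponentials density $f_{\rho_{g_{D_n}}\rho_b}(y)=\frac{2}{\lambda_4\lambda_7}K_0\!\big(2\sqrt{y/(\lambda_4\lambda_7)}\big)$ (the analogue of the density used for $D_f$, now with variances $\lambda_4$ and $\lambda_7$), I would pull the $y$-independent factor $e^{-\varsigma D_{D_n}/(\lambda_1\gamma)}$ out of the integral and apply \cite[Eq.~(6.614)]{12} to $\int_0^\infty e^{-\varsigma(A_{D_n}+Q_{D_n})y/\lambda_1}K_0\!\big(2\sqrt{y/(\lambda_4\lambda_7)}\big)\,dy$. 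This yields $-\Delta_2 e^{\Delta_2}\mathrm{Ei}(-\Delta_2)$ with $\Delta_2=\lambda_1/\big(\lambda_7\lambda_4\varsigma(A_{D_n}+Q_{D_n})\big)$, so that $\mathrm{P}_r(\cdot)=-\Delta_2 e^{\Delta_2-\varsigma D_{D_n}/(\lambda_1\gamma)}\mathrm{Ei}(-\Delta_2)$, and subtracting from $1$ gives (\ref{15}).

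There is no serious obstacle: the calculation is essentially a second instance of Theorem~1, with the Bessel-function integral again supplied by \cite[Eq.~(6.614)]{12}. The one step needing care is the collapse of the joint event to a single exponential tail; it works precisely because the $A_{D_n}+Q_{D_n}$ factor and the $D_{D_n}$ term are common to both SINRs, so that only the two prefactors differ and the $\max$ captures both constraints at once. I would also record the regularity conditions $\xi_{D_n}a_2>C_{D_n}\gamma_{th2}$ and $\xi_{D_n}a_1>(a_2 B_{D_n}+M_{D_n})\gamma_{th1}$, under which $\varsigma$ is finite and positive and the closed form (\ref{15}) holds.
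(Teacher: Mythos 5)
Your proposal is correct and follows exactly the route the paper intends: its own proof of this theorem is a one-line reference to the derivation of $P_{out}^{D_f}$ (substitute (\ref{5})--(\ref{6}) into the joint outage event and repeat the Theorem~1 calculation), and your filled-in details --- in particular the reduction of the two SINR constraints to the single threshold $\rho_{D_n}>\varsigma\bigl[\rho_{g_{D_n}}\rho_b(A_{D_n}+Q_{D_n})+D_{D_n}/\gamma\bigr]$ because both inequalities share the same bracketed factor, followed by averaging against the product density $\tfrac{2}{\lambda_4\lambda_7}K_0\bigl(2\sqrt{y/(\lambda_4\lambda_7)}\bigr)$ via \cite[Eq.~(6.614)]{12} --- reproduce (\ref{15}) with the stated $\Delta_2$. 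Your added regularity conditions $\xi_{D_n}a_2>C_{D_n}\gamma_{th2}$ and $\xi_{D_n}a_1>(a_2B_{D_n}+M_{D_n})\gamma_{th1}$ are implicit but unstated in the paper, so recording them is a small improvement rather than a deviation.
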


\begin{proof}
Following the similar derivation process of $P_{out}^{{D_f}}$, by substituting (\ref{5}), (\ref{6}) into (14), we can obtained $P_{out}^{{D_n}}$.
\end{proof}

\begin{corollary}
At high SNRs, the asymptotic expression for the OP of $D_f$ of the AmBC NOMA system is given as
\begin{equation}
\label{16}
P_{out,\infty }^{{D_n}} = 1 + {\Delta_2 }{e^{{\Delta_2 } }}{\rm{Ei}}\left( { - {\Delta_2 }} \right),
\end{equation}
\end{corollary}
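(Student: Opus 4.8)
The plan is to obtain the asymptotic expression \eqref{16} directly from the exact closed-form outage probability of $D_n$ established in Theorem 2, simply by letting the transmit SNR grow without bound. The crucial structural fact is that, in the exact formula \eqref{15}, the SNR $\gamma$ enters through exactly one place, namely the additive term $-\varsigma D_{D_n}/(\lambda_1\gamma)$ sitting inside the exponent; every other quantity in \eqref{15} is a fixed constant determined by the channel statistics, the IQI mismatch levels, and the target rates.

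First I would make this SNR-independence explicit. The prefactor $\Delta_2 = \lambda_1/(\lambda_7\lambda_4\varsigma(A_{D_n}+Q_{D_n}))$ depends only on the channel variances $\lambda_1,\lambda_4,\lambda_7$, the IQI-dependent constants $A_{D_n},Q_{D_n}$, and the threshold quantity $\varsigma$, none of which contains $\gamma$. The constants $D_{D_n}$ and $\varsigma$ are likewise fixed by the mismatch parameters and by the target rates $\gamma_{th1},\gamma_{th2}$. Hence the only genuinely $\gamma$-dependent ingredient of \eqref{15} is the factor $e^{-\varsigma D_{D_n}/(\lambda_1\gamma)}$, which decays like $O(1/\gamma)$ inside the exponent.

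Next I would pass to the limit $\gamma\to\infty$. Since $\varsigma D_{D_n}/(\lambda_1\gamma)\to 0$, the exponential factor obeys $e^{\Delta_2-\varsigma D_{D_n}/(\lambda_1\gamma)}\to e^{\Delta_2}$, while $\Delta_2$ and the exponential-integral value $\mathrm{Ei}(-\Delta_2)$ remain fixed finite constants (recall $\mathrm{Ei}(-\Delta_2)$ is finite for $\Delta_2>0$). Substituting this limit into \eqref{15} collapses the exact expression to $P_{out,\infty}^{D_n}=1+\Delta_2 e^{\Delta_2}\mathrm{Ei}(-\Delta_2)$, which is exactly the claimed \eqref{16}.

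Rather than a genuine analytical obstacle, the point I would stress is interpretive. The limiting value is a strictly positive constant independent of $\gamma$, so the outage probability of $D_n$ does not vanish as $\gamma\to\infty$; this exposes an irreducible error floor and hence a zero diversity order, a behavior traceable to the residual backscatter interference carried by $A_{D_n}+Q_{D_n}$ and ultimately to the reflection coefficient $\beta$ embedded in those constants. The limit itself is elementary and uniform, so no delicate estimate is required; the only care needed is to confirm that no hidden $\gamma$-dependence is concealed in $\varsigma$ or in the IQI constants, which would otherwise shift the value of the floor.
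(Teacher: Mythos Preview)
Your proposal is correct and follows exactly the approach implicit in the paper: the corollary is obtained from the exact expression in Theorem~2 by letting $\gamma\to\infty$, which eliminates the sole SNR-dependent factor $e^{-\varsigma D_{D_n}/(\lambda_1\gamma)}$ and leaves the constant error floor $1+\Delta_2 e^{\Delta_2}\mathrm{Ei}(-\Delta_2)$. The paper does not spell out this step, so your explicit verification that $\Delta_2$, $\varsigma$, and the IQI constants carry no hidden $\gamma$-dependence is a welcome addition.
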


\emph{3) Outage Probability for $BD$}

The BD signals can be successfully decoded when ${x_2}$ and ${x_1}$ are perfectly decode at $D_n$. Thus, the OP of BD can be expressed as
\begin{equation}
\label{11}
P_{out}^{BD} \!=\! 1 \!-\! {{\rm{P}}_r}\left( {\gamma _{{D_n}}^{{x_2}} > {\gamma _{th2}},\gamma _{{D_n}}^{{x_1}} > {\gamma _{th1}},\gamma _{{D_n}}^{c\left( t \right)} > {\gamma _{thc}}} \right),
\end{equation}
where ${\gamma _{thc}}$ is the target rate for decoding BD signals.
\begin{theorem}
For Rayleigh fading channels, we have

$\bullet$ Ideal conditions (${\varsigma _t} = {\varsigma _r} = 1$, ${\phi _t} = {\phi _r} = {0^ \circ }$)

For ideal conditions, the analytical expression for the OP of the BD in (15) is at the top of next page.
\begin{figure*}[!t]\label{18}
\normalsize
\begin{align}
\label{18}
P_{out}^{BD,id} = 1 + {\Delta_3 }{e^{{\Delta_3 } - \frac{{{\varsigma _{id}}}}{{\gamma {\lambda _1}}}}}{\rm{Ei}}\left( { - {\Delta_3 }} \right) + \frac{{\pi {\gamma _{thc}}}}{{N{\beta ^2}\gamma {\lambda _7}{\lambda _4}}}{e^{ - \frac{{{\varsigma _{id}}}}{{\gamma {\lambda _1}}}}}\sum\limits_{k = 0}^N {{e^{ - \frac{{{\gamma _{thc}}\left( {{\vartheta _k} + 1} \right)}}{{2{\lambda _1}}}}}{K_0}\left( {2\sqrt {\frac{{{\gamma _{thc}}\left( {{\vartheta _k} + 1} \right)}}{{2{\beta ^2}\gamma {\lambda _7}{\lambda _4}}}} } \right)\sqrt {1 - \vartheta _k^2} } ,
\end{align}
\hrulefill \vspace*{0pt}
\end{figure*}

\noindent{where ${\varsigma _{id}} \!=\! \max \!\left\{ {\frac{{{\gamma _{th2}}}}{{{a_2}  - {a_1} {\gamma _{th2}}}},\frac{{{\gamma _{th1}}}}{{{a_1} }}} \right\}$, ${\Delta_3 } \!= \!\frac{{{\lambda _1}}}{{{\lambda _7}{\lambda _4}{\varsigma _{id}}{\beta ^2}}}$, ${\vartheta _k} = \cos \!\!\left[ \!{\frac{\left( {2k - 1} \right)\pi}  {2N} } \!\right]$, $N$ is an accuracy-complexity tradeoff parameter.}

$\bullet$ Non-ideal conditions

For Non-ideal conditions, the analytical expression for the OP of BD in (16) is at the top of next page.
\begin{figure*}[!t]\label{19}
\normalsize
\begin{align}\nonumber
\label{19}
P_{out}^{BD,ni} =& 1 - {\Delta_4 }{e^{\frac{{{D_{{D_n}}}}}{{{\lambda _1}\left( {{B_{{D_n}}} + {M_{{D_n}}}} \right)\gamma }} + {\Delta_4 }}}{\rm{Ei}}\left( { - {\Delta_4 }} \right) + {\Delta_5 }{e^{{\Delta_5 } - \frac{{\varsigma {D_{{D_n}}}}}{{{\lambda _1}}}}}{\rm{Ei}}\left( { - {\Delta_5 }} \right) + {e^{ - \frac{{\varsigma {D_{{D_n}}}}}{{\gamma {\lambda _1}}}}}\frac{{\pi {\Delta_6 }}}{{N{\lambda _3}{\lambda _5}}}\sum\limits_{k = 0}^N {{e^{ - {\Delta_8 }}}{K_0}\left( {{\Delta_7 }} \right)\sqrt {1 - \vartheta _k^2} }  \\
&- {e^{\frac{{{D_{{D_n}}}}}{{{\lambda _1}\left( {{B_{{D_n}}} + {M_{{D_n}}}} \right)\gamma }}}}\frac{{\pi {\Delta_6 }}}{{N{\lambda _7}{\lambda _4}}}\sum\limits_{k = 0}^N {{e^{ - \frac{{{D_{{D_n}}}\left( {{\vartheta _k} + 1} \right)}}{{2{\lambda _1}\left( {{B_{{D_n}}} + {M_{{D_n}}}} \right)\gamma }}}}{K_0}\left( {{\Delta_7 }} \right)\sqrt {1 - \vartheta _k^2} }  ,
\end{align}
\hrulefill \vspace*{0pt}
\end{figure*}

\noindent{where ${\Delta_4 } = \frac{{{\lambda _1}\left( {{B_{{D_n}}} + {M_{{D_n}}}} \right){\gamma _{thc}}}}{{{\lambda _7}{\lambda _4}\left( {{Q_{{D_n}}} - {A_{{D_n}}}{\gamma _{thc}}} \right)}}$, ${\Delta_5 } = \frac{{{\lambda _1}}}{{{\lambda _7}{\lambda _4}\varsigma \left( {{A_{{D_n}}} + {Q_{{D_n}}}} \right)}}$, ${\Delta_6 } = \frac{{{D_{{D_n}}}{\gamma _{thc}}}}{{\left( {{Q_{{D_n}}}\gamma  - {A_{{D_n}}}\gamma {\gamma _{thc}}} \right)}}$, ${\Delta_7 } = 2\sqrt {\frac{{\left( {{\vartheta _k} + 1} \right){R_6}}}{{2{\lambda _7}{\lambda _4}}}} $, ${\Delta_8 } = \frac{{\varsigma \left( {{A_{{D_n}}} + {Q_{{D_n}}}} \right)\left( {{\vartheta _k} + 1} \right){R_6}}}{{2{\lambda _1}\gamma }}$.}
\end{theorem}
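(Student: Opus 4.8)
The plan is to mirror the derivation of $P_{out}^{D_f}$ and $P_{out}^{D_n}$, but now with the extra constraint $\gamma_{D_n}^{c(t)} > \gamma_{thc}$ coming into play. First I would substitute the three SINR expressions (\ref{5}), (\ref{6}), (\ref{7}) into (17). Exactly as in Theorems 1 and 2, the joint event $\{\gamma_{D_n}^{x_2}>\gamma_{th2},\,\gamma_{D_n}^{x_1}>\gamma_{th1}\}$ reduces to a single lower bound on $\rho_{D_n}$ of the form $\rho_{D_n} > \varsigma\bigl(\rho_{g_{D_n}}\rho_b(A_{D_n}+Q_{D_n})\gamma + D_{D_n}\bigr)/\gamma$ (with $\varsigma$ and, in the ideal case, $\varsigma_{id}$ as defined in the statement). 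The new inequality $\gamma_{D_n}^{c(t)}>\gamma_{thc}$, after clearing denominators, becomes a \emph{second} linear constraint coupling $\rho_{D_n}$ and the product $W:=\rho_{g_{D_n}}\rho_b$; one should check which of the two lower bounds on $\rho_{D_n}$ is binding, and that will split the $W$-integration range at the crossover point, producing the difference of two $\mathrm{Ei}$-type terms plus Gauss--Chebyshev correction sums seen in (\ref{18})--(\ref{19}).

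Concretely, I would condition on $W=\rho_{g_{D_n}}\rho_b$, whose density is $f_W(w)=\frac{2}{\lambda_7\lambda_4}K_0\!\bigl(2\sqrt{w/(\lambda_7\lambda_4)}\bigr)$ (the product of two independent exponentials, as already used for $f_{\rho_{g_{D_f}}\rho_b}$), integrate out the exponential variable $\rho_{D_n}$ over the region defined by the \emph{larger} of the two linear lower bounds, and then integrate over $w$. In the \emph{ideal} case the channel/IQI constants collapse ($A_{D_n}\to 0$, $\xi,B,C,M$ simplify), so only one of the two bounds survives for all $w$; the resulting single $w$-integral splits as $\int_0^\infty = $ one closed-form $\mathrm{Ei}$ piece via \cite[Eq.~(6.614)]{12} plus a residual integral over a bounded interval that has no elementary closed form, which I would evaluate by the Gauss--Chebyshev quadrature $\int_{-1}^1 \frac{g(\vartheta)}{\sqrt{1-\vartheta^2}}\,d\vartheta \approx \frac{\pi}{N}\sum_{k=1}^N g(\vartheta_k)$ with $\vartheta_k=\cos\frac{(2k-1)\pi}{2N}$ — this is exactly the source of the sum and the $\sqrt{1-\vartheta_k^2}$ factors in (\ref{18}). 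In the \emph{non-ideal} case the crossover between the two $\rho_{D_n}$-bounds occurs at a finite $w$, so the $w$-axis is partitioned into two sub-intervals; on each I again apply \cite[Eq.~(6.614)]{12} for the unbounded tail and Gauss--Chebyshev for the bounded part, which yields the four-term structure (the two $\Delta_4,\Delta_5$ exponential-integral terms with opposite signs, and the two quadrature sums with kernels $K_0(\Delta_7)$) of (\ref{19}). Throughout, $\Delta_3,\dots,\Delta_8$ are just the bookkeeping constants collecting $\lambda_j$, $\gamma$, $\gamma_{thc}$, $A_{D_n}$, $Q_{D_n}$, $B_{D_n}$, $M_{D_n}$ in the obvious way, and the positivity conditions $\xi_{D_n}a_2 - C_{D_n}\gamma_{th2}>0$, $Q_{D_n}-A_{D_n}\gamma_{thc}>0$, etc., are needed for the bounds to define nonempty regions.

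The main obstacle is the case analysis for the two competing linear lower bounds on $\rho_{D_n}$ as functions of $w$: one has to identify the crossover value $w^\star$ (which depends on whether the $c(t)$-constraint or the $x_1/x_2$-constraint dominates), verify $w^\star>0$ in the non-ideal regime and $w^\star=0$ (degenerate) in the ideal regime, and correctly assemble the piecewise integral — getting the signs and the limits right is what produces the subtraction of the two $\mathrm{Ei}$ terms and the two quadrature sums. A secondary, purely technical nuisance is matching the integrals that arise to the precise form of \cite[Eq.~(6.614)]{12} (completing the argument of the exponential against the $K_0$ kernel via a shift/scaling of the integration variable) and bounding the quadrature truncation error, which is controlled by $N$ and stated only as an ``accuracy--complexity tradeoff.''
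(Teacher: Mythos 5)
Your toolbox is the right one and largely matches the paper's (very terse) proof: condition on the product $W=\rho_{g_{D_n}}\rho_b$ with density $f_W(w)=\frac{2}{\lambda_7\lambda_4}K_0\bigl(2\sqrt{w/(\lambda_7\lambda_4)}\bigr)$, integrate out the exponential $\rho_{D_n}$, evaluate the unbounded $w$-integrals against the $K_0$ kernel via Gradshteyn--Ryzhik (the paper actually cites Eq.~(6.611) here, (6.614) in Theorem~1), and handle the leftover bounded-interval integrals by Gauss--Chebyshev quadrature with nodes $\vartheta_k=\cos\frac{(2k-1)\pi}{2N}$. That is exactly how the paper gets from (17) to (18), and the non-ideal case is only asserted "similarly."

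However, there is a genuine error in your case analysis: you treat $\gamma_{D_n}^{c(t)}>\gamma_{thc}$ as a second \emph{lower} bound on $\rho_{D_n}$ and integrate over $\{\rho_{D_n}>\max(\ell_1,\ell_2)\}$. Looking at (7), $\rho_{D_n}$ sits in the \emph{denominator} of $\gamma_{D_n}^{c(t)}$ through the $\rho_i(B_i+M_i)\gamma$ term — the direct-link signal is interference when decoding the backscatter signal — so clearing denominators gives an \emph{upper} bound $\rho_{D_n}<u(W)=\bigl[(Q_{D_n}-A_{D_n}\gamma_{thc})W\gamma-D_{D_n}\gamma_{thc}\bigr]/\bigl[(B_{D_n}+M_{D_n})\gamma\gamma_{thc}\bigr]$. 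The success region is the interval $\ell(W)<\rho_{D_n}<u(W)$, nonempty only for $W$ above a crossover $w^\star$, and the success probability is $\int_{w^\star}^{\infty}\bigl(e^{-\ell(w)/\lambda_1}-e^{-u(w)/\lambda_1}\bigr)f_W(w)\,dw$. It is precisely this \emph{difference of two survival functions} that produces the opposite-signed $\Delta_4$ and $\Delta_5$ exponential-integral terms and the two opposite-signed quadrature sums in (19); a max of two lower bounds would instead yield a piecewise single survival function and an unbounded residual integral that Gauss--Chebyshev cannot handle. Relatedly, your claim that the crossover degenerates to $w^\star=0$ in the ideal case is inconsistent with (18): when $B_{D_n}+M_{D_n}=0$ the $c(t)$ constraint ceases to involve $\rho_{D_n}$ but still forces $W>\gamma_{thc}/(\beta^2\gamma)>0$, and that positive threshold is exactly what generates the quadrature sum in (18). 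Fixing the direction of this inequality is not bookkeeping — without it the derivation does not assemble into the stated expressions.
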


\begin{proof} For ideal conditions, substituting ${\varsigma _t} = {\varsigma _r} = 1$ and ${\phi _t} = {\phi _r} = {0^ \circ }$ into (\ref{5}), (\ref{6}), (\ref{7}), then according to (14), $P_{out}^{BD,id}$ can be expressed as
\begin{align}\nonumber
\label{19}
 &P_{out}^{BD,id}= \int_{\frac{{{\gamma _{thc}}}}{{{\beta ^2}\gamma }}}^\infty  {{e^{ - \frac{{{\varsigma _{id}}\left( {y{\beta ^2} + 1} \right)}}{{{\lambda _1}}}}}\frac{2}{{{\lambda _7}{\lambda _4}}}{K_0}\left( {2\sqrt {\frac{y}{{{\lambda _7}{\lambda _4}}}} } \right)} dy\\\nonumber
&\;\;\;\;\;\;\;\;\;\;\;= \int_0^\infty  {{e^{ - \frac{{{\varsigma _{id}}\left( {y{\beta ^2} + 1} \right)}}{{{\lambda _1}}}}}\frac{2}{{{\lambda _7}{\lambda _4}}}{K_0}\left( {2\sqrt {\frac{y}{{{\lambda _7}{\lambda _4}}}} } \right)} dy \\
 &\;\;\;\;\;\;- \underbrace {\int_0^{\frac{{{\gamma _{thc}}}}{{{\beta ^2}\gamma }}} {{e^{ - \frac{{{\varsigma _{id}}\left( {y{\beta ^2} + 1} \right)}}{{{\lambda _1}}}}}\frac{2}{{{\lambda _7}{\lambda _4}}}{K_0}\left( {2\sqrt {\frac{y}{{{\lambda _7}{\lambda _4}}}} } \right)} }_ady,
\end{align}
where $a$ can be approximated by Gaussian-Chebyshev quadrature \cite{13}. Utilizing \cite[Eq. (6.611)]{12} to the first term of (17), we can obtain the result of (18) after some mathematical manipulations. Similarly, $P_{out}^{BD,ni}$ can be obtained.

\end{proof}

\begin{corollary}
At high SNRs, the asymptotic expressions for the OP of $BD$ for the AmBC NOMA system can be given by

$\bullet$ Ideal conditions
\begin{equation}
\label{11}
P_{out,\infty }^{BD,id} = 1 + {\Delta_3 }{e^{{\Delta_3 }}}{\rm{Ei}}\left( { - {\Delta_3 }} \right),
\end{equation}

$\bullet$ Non-ideal conditions
\begin{equation}
\label{11}
P_{out,\infty }^{BD,ni} = 1 - {\Delta_4 }{e^{{\Delta_4 }}}{\rm{Ei}}\left( { - {\Delta_4 }} \right) + {\Delta_5 }{e^{{\Delta_5 }}}{\rm{Ei}}\left( { - {\Delta_5 }} \right).
\end{equation}

\end{corollary}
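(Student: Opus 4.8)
The plan is to read off both asymptotic expressions directly from the exact closed forms established in the theorem for $P_{out}^{BD,id}$ and $P_{out}^{BD,ni}$, letting the transmit SNR $\gamma\to\infty$ and discarding every term that vanishes in that limit. The first step is to classify how each constant depends on $\gamma$. The quantities ${\Delta_3}=\lambda_1/(\lambda_7\lambda_4\varsigma_{id}\beta^2)$, ${\Delta_4}=\lambda_1(B_{D_n}+M_{D_n})\gamma_{thc}/(\lambda_7\lambda_4(Q_{D_n}-A_{D_n}\gamma_{thc}))$ and ${\Delta_5}=\lambda_1/(\lambda_7\lambda_4\varsigma(A_{D_n}+Q_{D_n}))$ are ratios of channel variances, power-allocation coefficients and target SINRs, hence \emph{independent} of $\gamma$; on the other hand ${\Delta_6}=D_{D_n}\gamma_{thc}/(\gamma(Q_{D_n}-A_{D_n}\gamma_{thc}))=O(1/\gamma)$, and the quantities $R_6,{\Delta_7},{\Delta_8}$ built from it obey ${\Delta_7}=O(1/\sqrt{\gamma})\to0$ and ${\Delta_8}=O(1/\gamma)\to0$. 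Finally, every exponential prefactor of the form $e^{-c/\gamma}$ appearing in the two expressions — $e^{-\varsigma_{id}/(\gamma\lambda_1)}$, $e^{-\varsigma D_{D_n}/(\gamma\lambda_1)}$ and $e^{D_{D_n}/(\lambda_1(B_{D_n}+M_{D_n})\gamma)}$ — tends to $1$. (I would first reconcile the minor notational slips in the non-ideal expression, i.e.\ $R_6$ versus ${\Delta_6}$ and the $\gamma$ missing from the $\varsigma D_{D_n}/\lambda_1$ term in the ${\Delta_5}$-exponential, against the definitions used in the proof of the theorem.)

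Next I would pass to the limit term by term. For the ideal case, the leading $1$ is constant, the second summand obeys ${\Delta_3}e^{{\Delta_3}-\varsigma_{id}/(\gamma\lambda_1)}\mathrm{Ei}(-{\Delta_3})\to{\Delta_3}e^{{\Delta_3}}\mathrm{Ei}(-{\Delta_3})$, and the Gaussian--Chebyshev sum carries the prefactor $\pi\gamma_{thc}/(N\beta^2\gamma\lambda_7\lambda_4)=O(1/\gamma)$ multiplied by $K_0$ of an argument that shrinks like $1/\sqrt{\gamma}$; using $K_0(x)=-\ln(x/2)+O(1)$ as $x\to0^+$, this Bessel factor grows only like $\tfrac12\ln\gamma$, so the whole (finite) sum is $O(\ln\gamma/\gamma)\to0$, which yields $P_{out,\infty}^{BD,id}=1+{\Delta_3}e^{{\Delta_3}}\mathrm{Ei}(-{\Delta_3})$. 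The same bookkeeping applied to the non-ideal expression sends $e^{D_{D_n}/(\lambda_1(B_{D_n}+M_{D_n})\gamma)}\to1$ and $e^{-\varsigma D_{D_n}/(\gamma\lambda_1)}\to1$ in the ${\Delta_4}$- and ${\Delta_5}$-terms, and kills both residual Gaussian--Chebyshev sums since each is $O({\Delta_6})\cdot K_0({\Delta_7})\cdot e^{-{\Delta_8}}=O(\ln\gamma/\sqrt{\gamma})\to0$; what survives is $P_{out,\infty}^{BD,ni}=1-{\Delta_4}e^{{\Delta_4}}\mathrm{Ei}(-{\Delta_4})+{\Delta_5}e^{{\Delta_5}}\mathrm{Ei}(-{\Delta_5})$, exactly the claimed forms (both of which are finite, since $\mathrm{Ei}(-\Delta)$ is finite for $\Delta>0$).

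The one genuinely delicate point — and the step I expect to be the main obstacle — is justifying that the quadrature sums are truly negligible rather than indeterminate: their Bessel kernels $K_0(\cdot)$ blow up as their arguments collapse to zero with $\gamma$, so one cannot simply substitute $\gamma=\infty$. The resolution is quantitative and must be stated as such: $K_0$ diverges only logarithmically whereas its multiplicative coefficient ($\propto1/\gamma$ in the ideal case, $\propto{\Delta_6}\propto1/\gamma$ in the non-ideal case) decays polynomially, so the product is $o(1)$; because the number of quadrature nodes $N$ is a fixed accuracy parameter, the finite sum over $k$ inherits this decay uniformly in $k$. With that bound in hand, all remaining limits are elementary continuity statements about $\mathrm{Ei}$ and $e^{-c/\gamma}$, and the two asymptotic identities follow.
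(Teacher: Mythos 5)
Your proposal is correct and follows the same route the paper implicitly takes: the corollary is stated without proof, being simply the $\gamma\to\infty$ limit of the exact expressions in Theorem 3, with the $e^{-c/\gamma}$ prefactors tending to $1$ and the Gaussian--Chebyshev sums vanishing. Your explicit justification that each quadrature sum is $o(1)$ --- the $K_0$ kernel diverges only logarithmically while its coefficient decays like $1/\gamma$, and $N$ is fixed --- is the one nontrivial step, and you handle it correctly (the minor overstatement $O(\ln\gamma/\sqrt{\gamma})$ in the non-ideal case should be $O(\ln\gamma/\gamma)$, which only strengthens the conclusion).
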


Furthermore, the diversity order is investigated, which can be defined as:
\begin{equation}\label{18}
d =  - \mathop {\lim }\limits_{\gamma  \to \infty } \frac{{\log \left( {P_{out}^\infty } \right)}}{{\log {\gamma}}},
\end{equation}

\begin{corollary}
The diversity order of $D_f$, $D_n$ and $BD$ are given as:
\begin{equation}\label{21}
{d_{{D_f}}} = {d_{{D_n}}} = d_{BD}^{id} = d_{BD}^{ni}.
\end{equation}

\end{corollary}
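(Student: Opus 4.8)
The plan is to show that each of the four asymptotic outage expressions recorded in Corollaries~1--3 converges, as $\gamma\to\infty$, to a \emph{strictly positive, $\gamma$-independent} constant (an error floor); then in the definition of the diversity order the numerator $\log\!\left(P_{out}^{\infty}\right)$ stays bounded while $\log\gamma$ diverges, so every diversity order equals $0$, which in particular gives ${d_{{D_f}}} = {d_{{D_n}}} = d_{BD}^{id} = d_{BD}^{ni}$.

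First I would observe that the auxiliary quantities ${\Delta_1},\dots,{\Delta_5}$, along with $\varsigma$ and ${\varsigma _{id}}$, are built solely from the channel variances ${\lambda _k}$, the reflection coefficient $\beta$, the power coefficients ${a_1},{a_2}$, the target SINRs ${\gamma _{th1}},{\gamma _{th2}},{\gamma _{thc}}$, and the IQI coefficients $\mu_{(\cdot)},\nu_{(\cdot)}$ --- none of which scales with the transmit SNR $\gamma$. Hence $P_{out,\infty}^{{\Delta_1}}$, $P_{out,\infty}^{{D_n}}$, $P_{out,\infty}^{BD,id}$ and $P_{out,\infty}^{BD,ni}$ are constants in $\gamma$. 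The Gaussian--Chebyshev residual sums present in the finite-SNR expressions for $P_{out}^{BD}$ carry a factor $1/\gamma$ and therefore drop out in the limit, which is exactly why the asymptotic expressions for the BD in Corollary~3 no longer display them.

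Next I would verify that these limiting constants lie strictly between $0$ and $1$ (hence are nonzero with finite logarithm). The key elementary bound is $0<-x\,e^{x}\,{\rm{Ei}}\!\left( { - x} \right)<1$ for all $x>0$, which follows by writing $-{\rm{Ei}}\!\left( { - x} \right)=e^{-x}\int_{0}^{\infty}\frac{e^{-u}}{u+x}\,du$ and using $u+x>x$ under the integral. Applying it with $x={\Delta_1}$, $x={\Delta_2}$ and $x={\Delta_3}$ gives $0<P_{out,\infty}^{{\Delta_1}},\,P_{out,\infty}^{{D_n}},\,P_{out,\infty}^{BD,id}<1$. For the non-ideal BD floor the two exponential-integral terms carry opposite signs, and combining $-{\Delta_4}e^{{\Delta_4}}{\rm{Ei}}\!\left( { - {\Delta_4}} \right)\in(0,1)$ with ${\Delta_5}e^{{\Delta_5}}{\rm{Ei}}\!\left( { - {\Delta_5}} \right)\in(-1,0)$ shows $P_{out,\infty}^{BD,ni}\in(0,2)$, so it too is strictly positive. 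Throughout I am tacitly using the SIC-feasibility conditions that make the ${\Delta_i}$ positive and finite, namely ${\xi _{{D_f}}}{a_2}>{C_{{D_f}}}{\gamma _{th2}}$, ${\xi _{{D_n}}}{a_1}>\left( {{a_2}{B_{{D_n}}}+{M_{{D_n}}}} \right){\gamma _{th1}}$, ${\xi _{{D_n}}}{a_2}>{C_{{D_n}}}{\gamma _{th2}}$ and ${Q_{{D_n}}}>{A_{{D_n}}}{\gamma _{thc}}$.

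Finally, substituting each floor into the diversity-order definition yields $d=-\lim_{\gamma\to\infty}\log(\mathrm{const})/\log\gamma=0$ for $D_f$, for $D_n$, and for $BD$ in both the ideal and non-ideal cases, which proves the asserted chain of equalities. The only genuinely non-routine step is establishing strict positivity of the four floors: were any of them to vanish, the quotient in the definition would be indeterminate, so the argument rests on the ${\rm{Ei}}$ inequality above and, in physical terms, on the reflection coefficient $\beta$ keeping the backscatter-induced residual interference from decaying with $\gamma$.
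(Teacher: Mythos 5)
Your proposal is correct and follows essentially the same route as the paper, which offers no formal proof of this corollary beyond the observation in Remark~1 that each asymptotic OP is a $\gamma$-independent constant (an error floor), forcing every diversity order to be $0$. Your additional verification that the floors are strictly positive via the bound $0<-x\,e^{x}\,{\rm{Ei}}\left( { - x} \right)<1$ is a worthwhile rigor the paper leaves implicit, but it does not change the underlying argument.
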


\begin{remark}
From \textbf{Corollary 1}, \textbf{Corollary 2}, \textbf{Corollary 3} and \textbf{Corollary 4}, we can see that when transmit SNR goes to infinity, the asymptotic outage performance of the $D_f$, $D_n$ and $BD$ become a constant. It means that the OP exists error floor, which results in $0$ diversity order. This error floor is determined by the parameters of IQI and the reflection coefficient $\beta $.
\end{remark}

\subsection{Intercept Probability Analysis}

User $j$, $j \in \{ {{D_f},{D_n},BD} \}$ will be intercepted if $E$ can successfully wiretap $j$'s signal, i.e. ${\gamma _E^{p} > {\gamma _{thE,j}}}$, $p \in {\rm{ }}\left\{ {{x_2},{x_1},c\left( t \right)} \right\}$. Thus, the IP of $j$ by $E$ can be expressed as
\begin{equation}
\label{11}
P_{int}^{j} = {{\rm{P}}_r}\left( {\gamma _E^{p} > {\gamma _{thE,j}}} \right),
\end{equation}
where ${\gamma _{thE,j}}$ is the secrecy SNR threshold for $j$.
\begin{theorem}
For Rayleigh fading channels, the analytical expressions for the IP of the far user, the near user and BD can be respectively obtained as

For the far user and near user, we have
\begin{equation}
\label{11}
P_{int}^{{D_f}} =  - {\Delta_9 }{e^{{\Delta_9 } - \frac{{{D_E}{A_2}}}{{{\lambda _2}}}}}{\rm{Ei}}\left( { - {\Delta_9 }} \right),
\end{equation}
\begin{equation}
\label{11}
P_{int}^{{D_n}} \!= \! - {\Delta_{10} }{e^{{\Delta_{10} } \!- \frac{{{D_E}{\gamma _{thE1}}}}{{{\lambda _3}\left( {{{\left| {{\mu _{{r_E}}}} \right|}^2}{{\left| {{\mu _{{t_s}}}} \right|}^2}{a_1}\gamma  - \!\left( {{a_2}{B_E} \!+\! {M_E}} \right)\gamma {\gamma _{thE1}}} \right)}}}}{\rm{Ei}}\left( { \!-\! {\Delta_{10} }} \right),
\end{equation}
where ${\Delta_9 } = \frac{{{\lambda _4}}}{{{\lambda _7}{\lambda _6}{A_2}\left( {{A_E} + {Q_E}} \right)}}$, ${A_2} = \frac{{{\gamma _{thE2}}}}{{{\xi _E}{a_2} - {C_E}{\gamma _{thE2}}}}$, ${\Delta _{10}} = \frac{{{\lambda _3}\left( {{{\left| {{\mu _{{r_E}}}} \right|}^2}{{\left| {{\mu _{{t_s}}}} \right|}^2}{a_1} - \left( {{a_2}{B_E} + {M_E}} \right){\gamma _{thE1}}} \right)}}{{{\lambda _7}{\lambda _6}\left( {{A_E} + {Q_E}} \right){\gamma _{thE1}}}}$.

For the BD, we have

$\bullet$ Ideal conditions (${\varsigma _t} = {\varsigma _r} = 1$, ${\phi _t} = {\phi _r} = {0^ \circ }$)
\begin{equation}
\label{11}
P_{int}^{BD,id} \!=\! 1 \!-\! \frac{{\pi {\gamma _{thE3}}}}{{N{\lambda _7}{\lambda _6}{\beta ^2}\gamma }}\!\sum\limits_{k = 0}^N \!{{K_0}\!\left( \!{2\sqrt {\frac{{{\gamma _{thE3}}\left( \!{{\vartheta _k} + 1} \!\right)}}{{2{\lambda _7}{\lambda _6}{\beta ^2}\gamma }}} } \!\right)\!\sqrt {1\! -\! \vartheta _k^2} } ,
\end{equation}

$\bullet$ Non-ideal conditions
\begin{figure*}[!t]\label{18}
\normalsize
\begin{align}\nonumber
\label{18}
P_{out}^{BD,ni} =& 1 - \frac{{\pi {\Delta _{11}}}}{N}\sum\limits_{k = 0}^N {{K_0}\left( {{\Delta _{13}}} \right)\sqrt {1 - \vartheta _k^2} }  + {\Delta _{12}}{e^{\frac{{{D_E}}}{{{\lambda _3}\left( {{B_E} + {M_E}} \right)\gamma }} + {\Delta _{12}}}}{\rm{Ei}}\left( { - {\Delta _{12}}} \right) \\
&+ {e^{\frac{{{D_E}}}{{{\lambda _3}\left( {{B_E} + {M_E}} \right)\gamma }}}}\frac{{\pi {\Delta _{11}}}}{N}\sum\limits_{k = 0}^N {{e^{ - \frac{{{D_E}\left( {{\vartheta _k} + 1} \right)}}{{2{\lambda _3}\left( {{B_E} + {M_E}} \right)\gamma }}}}{K_0}\left( {{\Delta _{13}}} \right)\sqrt {1 - \vartheta _k^2} }  ,
\end{align}
\hrulefill \vspace*{0pt}
\end{figure*}

For Non-ideal conditions, the analytical expression for the IP of BD in (26) is at the top of next page.

\noindent{where ${\Delta _{11}} = \frac{{{D_E}{\gamma _{thE3}}}}{{{\lambda _7}{\lambda _6}\left( {{Q_E}\gamma  - {A_E}\gamma {\gamma _{thE3}}} \right)}}$, ${\Delta _{12}} = \frac{{{\lambda _3}\left( {{B_E} + {M_E}} \right){\gamma _{thE3}}}}{{{\lambda _7}{\lambda _6}\left( {{Q_E} - {A_E}{\gamma _{thE3}}} \right)}}$, ${\Delta _{13}} = 2\sqrt {\frac{{{\Delta _{11}}\left( {{\vartheta _k} + 1} \right)}}{2}} $.}
\end{theorem}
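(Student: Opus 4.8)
The plan is to obtain all three intercept probabilities by the same two‑step conditioning argument used for the outage probabilities in Theorems~1--3. Fixing the composite $BD\to E$ gain $y=\rho_{g_E}\rho_b$, each wiretap event $\{\gamma_E^{p}>\gamma_{thE,j}\}$ reduces to an inequality that is linear in the direct‑link gain $\rho_E=|h_E|^2$; since $\rho_E$ is exponentially distributed, the conditional wiretap probability is an exponential in $y$, and I would then average over $y$ using the product‑of‑two‑exponentials density $\frac{2}{\lambda_6\lambda_7}K_0(2\sqrt{y/(\lambda_6\lambda_7)})$ that already appears in the proof of Theorem~1; here $\lambda_3$ and $\lambda_6$ denote the $S\to E$ and $BD\to E$ channel variances.

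For the far user, substituting $\gamma_E^{x_2}$ from (\ref{5}) with $i=E$ into $P_{int}^{D_f}=\Pr(\gamma_E^{x_2}>\gamma_{thE2})$ and clearing denominators makes the event equivalent to $\rho_E>A_2[(A_E+Q_E)y+D_E/\gamma]$, which is valid provided $\xi_E a_2-C_E\gamma_{thE2}>0$; if that condition fails, $\gamma_E^{x_2}$ is bounded above by $\gamma_{thE2}$ and the intercept probability vanishes, which I would state explicitly. Averaging over $y$ then gives $P_{int}^{D_f}=\int_0^{\infty}e^{-A_2[(A_E+Q_E)y+D_E/\gamma]/\lambda_3}\frac{2}{\lambda_6\lambda_7}K_0(2\sqrt{y/(\lambda_6\lambda_7)})\,dy$, which closes via the integral identity \cite[Eq.~(6.614)]{12} already used for the far‑user OP and produces the claimed $\mathrm{Ei}$‑form with $\Delta_9$ and $A_2$. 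The near user is handled identically using $\gamma_E^{x_1}$ from (\ref{6}): the extra interference term $\rho_E(a_2B_E+M_E)\gamma$ only modifies the coefficient of $\rho_E$ after rearrangement, which yields $\Delta_{10}$ and the stated exponent, with feasibility condition $|\mu_{r_E}|^2|\mu_{t_S}|^2a_1-(a_2B_E+M_E)\gamma_{thE1}>0$.

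For the BD with ideal RF front‑ends, putting $\varsigma_t=\varsigma_r=1$ and $\phi_t=\phi_r=0$ sends every $\mu$‑coefficient to $1$ and every $\nu$‑coefficient to $0$, so $A_E\to0$, $B_E=M_E=0$, $D_E=1$, $Q_E=\beta^2$, and (\ref{7}) collapses to $\gamma_E^{c(t)}=\beta^2\gamma\,\rho_{g_E}\rho_b$; hence $P_{int}^{BD,id}=\Pr(\rho_{g_E}\rho_b>\gamma_{thE3}/(\beta^2\gamma))=1-\int_0^{\gamma_{thE3}/(\beta^2\gamma)}\frac{2}{\lambda_6\lambda_7}K_0(2\sqrt{y/(\lambda_6\lambda_7)})\,dy$. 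This incomplete Bessel integral has no elementary primitive, so --- exactly as in the proof of Theorem~3 --- I would rescale the range by $y=\frac{\gamma_{thE3}}{2\beta^2\gamma}(\vartheta+1)$ and apply the Gaussian--Chebyshev rule \cite{13}, $\int_{-1}^{1}g(\vartheta)\,d\vartheta\approx\frac{\pi}{N}\sum_{k=1}^{N}g(\vartheta_k)\sqrt{1-\vartheta_k^2}$ with $\vartheta_k=\cos\frac{(2k-1)\pi}{2N}$, recovering the stated sum. Under non‑ideal conditions (\ref{7}) is kept in full; rearranging $\gamma_E^{c(t)}>\gamma_{thE3}$ now gives a two‑sided event --- $y$ must exceed $t_0=\gamma_{thE3}D_E/(\gamma(Q_E-A_E\gamma_{thE3}))$, and given that, $\rho_E<\psi(y)$ for an affine $\psi$ --- so averaging over $y$ expresses $P_{int}^{BD,ni}$ as $\int_{t_0}^{\infty}(1-e^{-\psi(y)/\lambda_3})\frac{2}{\lambda_6\lambda_7}K_0(2\sqrt{y/(\lambda_6\lambda_7)})\,dy$. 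Writing $\int_{t_0}^{\infty}=\int_0^{\infty}-\int_0^{t_0}$, the $\int_0^{\infty}$ pieces close in terms of $\mathrm{Ei}$ via \cite[Eq.~(6.614)]{12} (the $\Delta_{12}$ term) and the finite $\int_0^{t_0}$ pieces go through the same Gaussian--Chebyshev quadrature (the $\Delta_{11},\Delta_{13}$ terms); collecting everything yields the displayed expression for $P_{int}^{BD,ni}$.

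The calculus is routine; the real work is the bookkeeping. One must track exactly which products of the imbalance parameters $\mu_{t_S},\nu_{t_S},\mu_{r_{BD}},\nu_{r_{BD}},\mu_{t_{BD}},\nu_{t_{BD}},\mu_{r_E},\nu_{r_E}$ land in the numerator versus the denominator after clearing fractions --- this is precisely what pins down $A_2$ and $\Delta_9$--$\Delta_{13}$ --- and one must carry and state the feasibility conditions ($\xi_E a_2-C_E\gamma_{thE2}>0$, $|\mu_{r_E}|^2|\mu_{t_S}|^2a_1-(a_2B_E+M_E)\gamma_{thE1}>0$, $Q_E-A_E\gamma_{thE3}>0$) outside which the corresponding intercept probability degenerates to $0$ or $1$. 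I expect the non‑ideal BD case to be the main obstacle, since it is the only one whose $y$‑integral starts at a nonzero lower limit and therefore mixes an exact $\mathrm{Ei}$ contribution with a quadrature‑approximated one, forcing the careful split and the simultaneous handling of $\Delta_{11}$--$\Delta_{13}$.
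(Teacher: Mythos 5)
Your proposal is correct and follows exactly the template the paper uses for Theorems 1 and 3 (condition on $\rho_E$, average over the product gain $\rho_{g_E}\rho_b$ with the $K_0$ density, close the infinite-range integrals via \cite[Eq.~(6.614)]{12} and the finite-range ones via Gaussian--Chebyshev quadrature); the paper in fact omits any proof of Theorem~4, so your reconstruction is the intended one. Your explicit statement of the feasibility conditions (e.g.\ $\xi_E a_2 - C_E\gamma_{thE2}>0$, $Q_E - A_E\gamma_{thE3}>0$) is a small improvement over the paper, which leaves them implicit.
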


\begin{remark}
From \textbf{Theorem 4}, We can obtain that as the reflection coefficient $\beta $ increases, $P_{int}^{{D_f}}$ and $P_{int}^{{D_n}}$ decreases, while $P_{int}^{BD}$ increases; Moreover, IQI parameters has a beneficial effect on the IPs for the far user, the near user and the BD.
\end{remark}

\begin{figure}[!t]
\setlength{\abovecaptionskip}{0pt}
\centering
\includegraphics [width=3.5in]{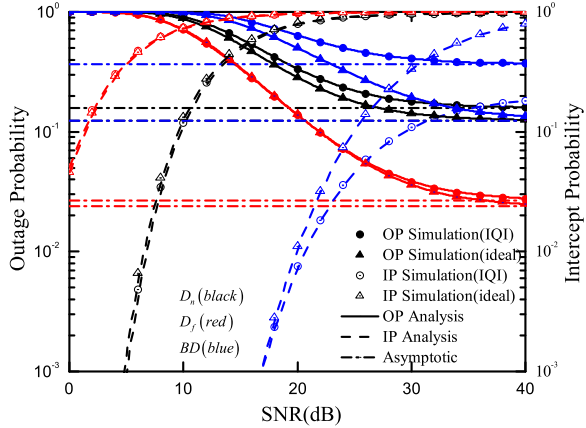}
\caption{{ OP and IP versus the transmit SNR}}
\label{fig3}
\end{figure}
\section{Numerical Results}
This section provides some numerical results to validate the correctness of the analysis in Section III. The results are verified by using Monte Carlo simulations with ${10^6}$ trials. Unless otherwise stated, we have the following settings: ${N_0} = 1$, ${\gamma _{thf}} = 1$, ${\gamma _{thn}} = 2$, ${\gamma _{thc}} = 0.1$, ${\gamma _{thE,{D_n}}} = 1$, ${\gamma _{thE,{D_f}}} = 1.2$, ${\gamma _{thE,{BD}}} = 0.8$; The power allocation coefficients are ${a_1} =0.1$ and ${a_2} = 0.9$; The channel fading parameters are ${\lambda _1} ={\lambda _3}= 1$, ${\lambda _2} = 0.1$, ${\lambda _4} = 0.5$, ${\lambda _5} = 0.8$, ${\lambda _6} = 0.2$, ${\lambda _7} = 0.1$. The reflection coefficient is $\beta  = 0.1$, while for the ideal RF front-end, ${\varsigma _t} = {\varsigma _r} = 1$, ${\phi _t} = {\phi _r} = 0$.

Fig. 2 plots the OP and IP versus transmit SNR under ideal and IQI conditions with ${\varsigma _t} = {\varsigma _r} = 1.05$, ${\phi _t} = {\phi _r} = {20^ \circ }$. The theoretical results are in excellent agreement with the equivalent simulated results. A specific observation can be obtained that IQI have significant effect on the near user, this happens because that the near user suffers from SIC, which is impaired by IQI. The changes of OP and IP for BD are very obvious due to the effects of two time SIC impired by IQI, which indicate that BD performance is more sensitive to IQI at high SNRs. On the other hand, although IQI reduces the reliability of the considered system, it enhances the security to some extent. Finally, we can also see that there exists a trade-off between reliability and security, that is, when the outage performance is relaxed, IP can be enhanced, and vice versa.
%The similar conclusion can be seen in \cite{8879726}.

\begin{figure}[!t]
\setlength{\abovecaptionskip}{0pt}
\centering
\includegraphics [width=3.5in]{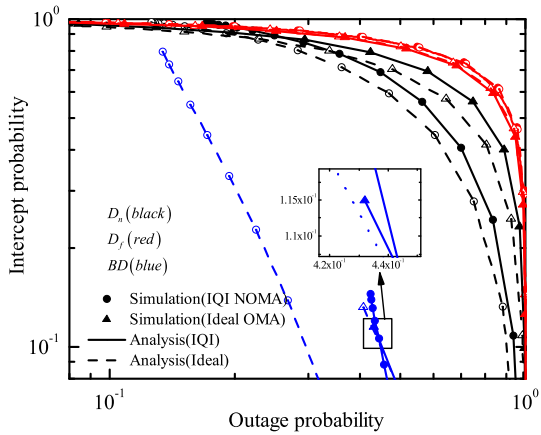}
\caption{{IP versus OP for the $D_f$, $D_n$ and BD }}
\label{fig2}
\end{figure}
\begin{figure}[!t]
\setlength{\abovecaptionskip}{0pt}
\centering
\includegraphics [width=3.5in]{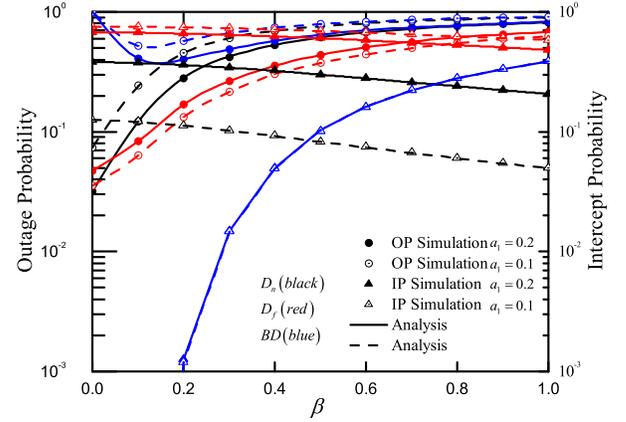}
\caption{{OP and IP versus the reflection coefficient $\beta $ for different power allocation parameter $a_1$}}
\label{fig4}
\end{figure}

Fig. 3 demonstrates the impact of IP versus OP under ideal/non-ideal conditions ${\varsigma _t} = {\varsigma _r} = \{1, 1.05\}$, ${\phi _t} = {\phi _r} = \{0^ \circ , 25^ \circ \}$. For the purpose of comparison, the curves of OMA are provided with $\gamma _{thc}^{OMA} = \gamma _{thE,BD}^{OMA} = 1.8$.\footnote{The reason for this parameter setting is that when the parameter of NOMA is set to $\gamma _{thc}^{OMA} = \gamma _{thE,BD}^{OMA} = 1.8$, which lead to the OP of BD is 1, and the IP is too small.}  These results show that for a given OP, as IQI increases the corresponding IP reduces, which leads the decrease of the secrecy performance. In addition, we can observe that the effect of IQI on each user is different. This effect depends on the decoding order of the users and the system parameter settings, in which the far user is the least affected and experiences almost no performance degradation caused by the presence of IQI. BD changes is most significant.  We can also observe that for a given OP, the secure performance of NOMA is better than that of OMA system for the near user, the performance of the far user is the opposite, this is due to the large power allocation parameter of the far user. Finally, it should be noted that the IP of the BD is the smallest, i.e., the BD has the best secure performance.

Fig.4 presents the OP and IP versus the reflection coefficient $\beta $ for different power allocation parameter $a_1$ with ${\varsigma _t} = {\varsigma _r} = 1.1$, ${\phi _t} = {\phi _r} = {5^ \circ }$, where the transmit SNR for OP and IP are $SNR=25dB$, $SNR=10dB$. From these curves, we can observe that $P_{out}^{{D_f}}$ and $P_{out}^{{D_n}}$ increases, $P_{int}^{{D_f}}$ and $P_{int}^{{D_n}}$ decrease with the increase of $\beta $. This happens because that when $\beta $ becomes large, the interference of the backscatter link increases, which reduces the reliability of $D_f$ and $D_n$, and improve the security. The OP of BD decreases first and then increases as $\beta $ increases. This happens because that when $\beta $ is small, it is easier for $D_n$ to decode its own information successfully, but it is difficult to decode BD signals. When $\beta $ is large, it is difficult for $D_n$ to decode its own information successfully. Noted that when $SNR=25dB$, the outage performance is optimal when $\beta=0.14$ for $a_1=0.2$, $\beta=0.12$ for $a_1=0.1$. This means that the optimal reflection coefficient is determined by power allocation coefficient.

\section{Conclusion}
This paper investigates the impact of IQI on the reliability and security of the AmBC NOMA system. This is carried out by deriving closed-form analytical expressions of the OP and IP in the presence of IQI, and the asymptotic behavior in the high SNR regime and diversity order for the OP are analyzed. The simulation results show that although IQI reduces the reliability for the far user, the near user and the BD, while it also improves the security of the the three devices. In addition, the increase of $\beta$ will increase the OP and decrease the IP for far user and near user. Finally, we can conclude that BD of the AmBC NOMA system has better secrecy performance, which drives the application in the IoT networks.

\bibliographystyle{IEEEtran}
\bibliography{ZML_tvt}

\end{document}